\newtheorem{theorem}{Theorem}[section]
\newtheorem{lemma}[theorem]{Lemma}
\begin{document}

A. G. Ramm, Inverse scattering on the half-line revisited,
Reports on Math. Physics (ROMP), 76, N2, (2015), 159-169. 

\title{Inverse scattering on the half-line revisited}

\author{A. G. Ramm\\
\small Mathematics Department\\
\small Kansas State University, Manhattan, KS 66506, USA\\
\small email: ramm@math.ksu.edu}
\date{}
\maketitle

\begin{abstract}

The inverse scattering problem on the half-line has been studied in the literature in detail. V. Marchenko presented the solution to this problem. In this paper, the invertibility of the steps of the inversion procedure is discussed and a new set of necessary and sufficient conditions on the scattering data is given for the scattering data to be generated by a potential $q \in L_{1,1}$. Our proof is new and in contrast with Marchenko's proof does not use equations on the negative half-line.

\end{abstract}

\noindent\textbf{Key words:}
inverse scattering; Riemann problem.

\noindent\textbf{MSC[2010]:}  34K29; 34L25; 34L40; 34B40.

\section{Introduction} \label{Introduction}
Let $lu := -u'' + q(x)u$, where $q$ is a real-valued function such that $$\displaystyle\int_0^{\infty}x|q(x)|dx < \infty.$$
The class of such $q$ is called $L_{1,1}$. The domain $D(l) = \{ u: u \in H^2 = H^2(\mathbb{R}_{+}), u(0) = 0, lu \in L^2 = L^2(\mathbb{R}_{+}) \}$,
$\mathbb{R}_{+} := [0, \infty)$. The closure of the operator $l$ is self-adjoint, its continuous spectrum is $[0, \infty)$ and its
discrete spectrum consists of finitely many negative eigenvalues of finite multiplicity.

Let $\varphi := \varphi(x,k)$ be the solution to the problem $l\varphi = k^2\varphi$ in $\mathbb{R}_{+}$, $\varphi(0,k) = 0, \varphi'(0,k) = 1$,
and $f := f(x,k)$ solve the problem $lf = k^2f$ in $\mathbb{R}_{+}$, $f = e^{ikx} + o(1)$ as $x \to \infty$. It is known that $\varphi(0,k)$ is an entire
function of $k$ and $f(x,k)$ is an analytic function of $k$ in the half-plane Im$k>0$, (see \cite{L}, \cite{M} or \cite{R470}).

Let $\varphi_j(x) := \varphi(x,ik_j), f_j := f(x,ik_j)$, where $k_j > 0$ and $ik_j$ are simple zeros of $f(k) := f(0,k)$ in
 $C_{+} := \{ z: \text{Im}z > 0 \}$. The numbers $-k_j^2$, $1 \leq j \leq J$, are negative eigenvalues of the operator $l$.

It is known that
$$||\varphi_j||^2 := ||\varphi_j||^2_{L^2(\mathbb{R}_{+})} := c_j^{-1} = s_j^{-1}\left[ f'(0,ik_j)\right]^{-2},$$ where
$\varphi_j(x) = \displaystyle\frac{f_j(x)}{f'(0,ik_j)}$, $s_j^{-1} := ||f_j||^2$, $s_j>0$.

The eigenfunction expansion theorem is known (see \cite{L}, \cite{M} or [5]-[8]):
\begin{equation}\label{eq1}
\int_0^{\infty}\varphi(x,k)\varphi(y,k)\frac{2}{\pi}\frac{k^2dk}{|f(k)|^2} + \sum_{j = 1}^Js_jf_j(x)f_j(y) = \delta(x - y).
\end{equation}

Define the scattering data $\mathcal{S}$ to be the collection
\begin{equation}\label{eq2}
\mathcal{S} := \{ S(k), k_j, s_j, 1 \leq j \leq J \},
\end{equation}
where $S(k) := \frac{f(-k)}{f(k)}$, and
\begin{equation}\label{eq2'}
f(x,k) = e^{ikx} + \int_x^{\infty}A(x,y)e^{iky}dy = e^{ikx}\left( 1 + \int_0^{\infty}A(x,x+p)e^{ikp}dp \right).
\end{equation}

The {\em inverse scattering problem} (ISP) consists of finding $q(x)$ from the knowledge of $\mathcal{S}$.
 The ISP has been solved in \cite{M}, [5],[6]. This solution consists of the following three steps:
\begin{equation}\label{eq3}
\mathcal{S} \overset{1}{\Longrightarrow} F \overset{2}{\Longrightarrow} A \overset{3}{\Longrightarrow} q,
\end{equation}
where (this is step 1):
\begin{equation}\label{eq4}
F(x) := F_s + F_d := \frac{1}{2\pi}\int_{-\infty}^{\infty}[1 - S(k)]e^{ikx}dx + \sum_{j = 1}^Js_je^{-k_jx},
\end{equation}
$A := A(x,y)$ is the kernel to be found from the basic equation, also called Marchenko's equation, (this is step 2):
\begin{equation}\label{eq5}
A(x,y) + F(x + y) + \int_x^{\infty}A(x,s)F(s + y)ds = 0, \quad y \geq x,
\end{equation}
and the potential is found by the formula (this is step 3):
\begin{equation}\label{eq6}
q(x) = -2\dot{A} := -2\,\frac{dA(x,x)}{dx}.
\end{equation}
Note  that $q$ and $A(x,y)$ are related by the equation
\begin{equation}\label{eq6'}
A(x,y) = \frac{1}{2}\int_{\frac{x+y}{2}}^{\infty}q(s)ds + \frac{1}{2}\int_x^\infty dsq(s)\int_{y - s + x}^{y + s - x} A(s,u)du,
\end{equation}
(see \cite{M}, p.175). It is proved in \cite{M} (and in \cite{R470}) that if $q \in L_{1,1}$ then equation \eqref{eq5} is
uniquely solvable for $A(x,y)$ for any $x \geq 0$, the operator
$$Fh := \displaystyle\int_0^\infty F(s + y)h(s)ds$$
is compact in $L_p(\mathbb{R}_{+})$, $p = 1,2,\infty$, and
the operator $(I + F)^{-1}$ is bounded in $L_p(\mathbb{R}_{+})$, $p = 1,2,\infty$.

In \cite{M} {\em the  invertibility of the steps in the inversion scheme \eqref{eq3} is not discussed}. This invertibility is one of the topics discussed in our paper (see also \cite{R470}).
The other topic is {\em  necessary and sufficient conditions on the scattering data $\mathcal{S}$ for
 these data to correspond to a potential $q \in L_{1,1}$.} In the book \cite{M} on p. 234 in Theorem 3.3.3 it is claimed that
 conditions I and II (see p. 218 of \cite{M}) are necessary and sufficient for $\mathcal{S}$ to be generated by a $q \in L_{1,1}$.

In our paper the necessary and sufficient conditions on $\mathcal{S}$ for $q \in L_{1,1}$ are different from these in \cite{M}. Our proofs
contain many new arguments based on the theory of Riemann problem.
{\em In contrast with the arguments in \cite{M}, we use neither equations
 on $(-\infty, 0]$  nor the equation with the operator $I + F_{s,0}^{+}$ in $L^2(\mathbb{R}_{+})$ (see \cite{M}, p. 228, Theorem 3.3.2).}

In condition II on p. 218 in \cite{M} there is a misprint (the term $\displaystyle\frac{1 - S(0)}{1}$ should be replaced
 by $\displaystyle\frac{1 - S(0)}{4}$). We use a different condition:
\begin{equation}\label{eq7}
\text{ind}S(k) = \text{ind}\frac{f(-k)}{f(k)} = -2\text{ind}f(k) = \begin{cases} -2J &\mbox{if } f(0) \neq 0, \\
-2J - 1 & \mbox{if } f(0) = 0. \end{cases}
\end{equation}

Our necessary and sufficient conditions on the scattering data for these data to be generated by a $q\in L_{1,1}$ are:
\begin{eqnarray}
S(-k) = \overline{S(k)} = S^{-1}(k), \quad k \in \mathbb{R}; \quad S(\infty) = 1, \label{eq8} \\
k_j > 0, \quad s_j > 0, \quad 1 \leq j \leq J, \label{eq9} \\
F_s(x) \in L^1(\mathbb{R}), \quad xF' \in L^1(\mathbb{R}_+), \label{eq10} \\
\text{ind}S(k) \text{ is a non-positive integer.} \label{eq11}
\end{eqnarray}

Here the overline stands for complex conjugate,
\begin{equation}\label{eq12}
\text{ind}S(k) := \frac{1}{2\pi}\Delta_{(-\infty, \infty)}\text{arg}S(k),
\end{equation}
and $\Delta_{(-\infty, \infty)}\text{arg}S(k)$ is the increment of the argument of $S(k)$ when $k$ runs from $-\infty$ to $\infty$ along the real axis.

Our results can be formulated in the following theorems:
\begin{theorem}\label{Thm1}
All steps in the recovery scheme \eqref{eq2} are reversible.
\end{theorem}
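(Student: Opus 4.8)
By reversibility of the recovery scheme \eqref{eq3} I understand that each of the three maps $\mathcal S\mapsto F$, $F\mapsto A$, $A\mapsto q$ is injective, and the plan is to prove this constructively, by exhibiting for each arrow an explicit reconstruction of the input from the output. The last two reversals reduce to the unique solvability of Volterra-type integral equations and are routine; the genuinely substantive step is the inversion of step~1, namely the separation of the continuous and discrete parts of $F$.

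For reversing step~3 I would use \eqref{eq6'}: for each fixed $x$ this is a linear integral equation for the kernel $A(x,\cdot)$ whose operator is of Volterra type (the outer integration runs over $[x,\infty)$ and the inner one over a shrinking interval), so that the iterated kernels are summable for $q\in L_{1,1}$ and the equation is uniquely solvable by its Neumann series; hence $q$ determines $A$. For reversing step~2 I would set $x=0$ in the Marchenko equation \eqref{eq5} and substitute $u=s+y$, obtaining
\[
F(y)+\int_y^\infty A(0,u-y)F(u)\,du=-A(0,y),\qquad y\ge 0,
\]
which is a Volterra equation of the second kind for $F$ (the integration is over $[y,\infty)$). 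The decay of $A(0,\cdot)$ forced by $q\in L_{1,1}$ makes it uniquely solvable — one may iterate from large $y$, where $F(y)\approx-A(0,y)$ — so $A$ determines $F$.

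The heart of the matter is reversing step~1. Here $F=F_s+F_d$ with $F_d(x)=\sum_{j=1}^J s_je^{-k_jx}$ and $F_s$ the Fourier transform of $1-S$. I would encode the discrete data as the exponential rates and amplitudes present in $F$: equivalently, the Laplace transform $\int_0^\infty F(x)e^{ikx}\,dx$, analytic for $\mathrm{Im}\,k>0$, continues meromorphically into $\mathrm{Im}\,k<0$ with simple poles at $k=-ik_j$ whose residues carry the $s_j$, while its boundary values along $\R$ reproduce $1-S$. Peeling off these poles — equivalently, isolating the dominant exponential, subtracting it, and repeating — recovers $(k_j,s_j)_{j=1}^J$ in finitely many steps; then $F_s=F-F_d$ and $S=1-\widehat{F_s}$, so that together with $\{k_j,s_j\}$ one reconstructs $\mathcal S$.

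The main obstacle is precisely this last separation. It succeeds because of a sharp dichotomy between the discrete part, which extends to a finite sum of exponentials with poles confined to $\mathrm{Im}\,k<0$, and the continuous part, whose transform $1-S$ is continuous with $S(\infty)=1$, so that $F_s$ decays by the Riemann–Lebesgue lemma. The analytic input required — the continuity and decay of $1-S$, together with the location of the zeros $ik_j$ of $f(k)$ supplied by the Riemann-problem structure and by conditions \eqref{eq8}--\eqref{eq11} — is exactly what guarantees that the continuous and discrete data cannot interfere and that the reconstructed $S$ is again admissible. By comparison, the Volterra inversions in steps~2 and~3 are routine, so I expect the verification of this non-interference and of the admissibility of the recovered scattering data to be where the real work lies.
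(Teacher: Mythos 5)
Your overall architecture coincides with the paper's: reversibility is proved by exhibiting the three backward maps, and your treatments of $q\Rightarrow A$ (the Volterra equation \eqref{eq6'} solved by iteration) and of $A\Rightarrow F$ (the Marchenko equation at $x=0$ read as a Volterra equation for $F$ on $[y,\infty)$, uniquely solvable because $A(0,\cdot)\in L^1$ by \eqref{eq13}) are exactly the paper's first argument for those two arrows. (The paper also records an alternative for the second arrow: from $A$ build $f(x,k)$ by \eqref{eq15}, read off $S=f(-k)/f(k)$, the zeros $ik_j$, and $s_j=\|f_j\|^{-2}$, which recovers $\mathcal S$ without passing through $F$ at all.)

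The gap is in the step you correctly identify as the heart of the matter, $F\Rightarrow\mathcal S$. Your separation of $F_d$ from $F_s$ is run on the half-line $[0,\infty)$: you invoke the Laplace transform $\int_0^\infty F(x)e^{ikx}\,dx$ and its meromorphic continuation into $\mathrm{Im}\,k<0$, or equivalently the isolation of the ``dominant exponential'' for large positive $x$. Neither version goes through. The half-line transform of $F_s$ is analytic in $\mathrm{Im}\,k>0$ but has no meromorphic continuation below the axis unless $F_s$ decays exponentially on $[0,\infty)$, which is not available ($F_s$ is merely $L^1$ with $xF'\in L^1$; it may decay like $x^{-2}$); also its boundary values on $\R$ are not $1-S$, which is the \emph{full-line} transform. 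For the same reason, as $x\to+\infty$ every term $s_je^{-k_jx}$ is in general swamped by $F_s$, so there is no dominant exponential to isolate there. The paper's resolution is to send $x\to-\infty$: there $F_s(x)\to0$ by Riemann--Lebesgue (since $1-S\in L^1(\R)$), while $F_d(x)=\sum_{j}s_je^{-k_jx}$ blows up with leading term $s_Je^{-k_Jx}$ for the largest $k_J$; one reads off $(k_J,s_J)$ from that leading asymptotic, subtracts, and repeats $J$ times, after which $F_s=F-F_d$ and $1-S(k)=\int_{-\infty}^\infty F_s(x)e^{-ikx}\,dx$. Note this uses $F$ on the negative half-line, where formula \eqref{eq4} defines it; with that one change of venue your ``peel off the exponentials'' plan becomes the paper's proof.
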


\begin{theorem}\label{Thm2}
If $q \in L_{1,1}$ then assumptions \eqref{eq8} - \eqref{eq11} are satisfied. Conversely, if assumptions \eqref{eq8} - \eqref{eq11} hold, then the scattering data $\mathcal{S}$, defined in \eqref{eq2}, corresponds to $q \in L_{1,1}$.
\end{theorem}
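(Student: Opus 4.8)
The plan is to prove the two implications of the equivalence separately, treating the necessity direction first as the more routine one and devoting the bulk of the effort to sufficiency. For necessity, I would start from a real potential $q \in L_{1,1}$ and invoke the classical analytic properties of the Jost solution recalled in the Introduction: $f(k) = f(0,k)$ extends analytically to $C_+$, is continuous up to the real axis, and satisfies $f(k) \to 1$ as $|k| \to \infty$ in $\overline{C_+}$. Reality of $q$ gives $\overline{f(k)} = f(-k)$ for real $k$, and together with $|S(k)| = 1$ on the axis this yields both the symmetry $S(-k) = \overline{S(k)} = S^{-1}(k)$ and the normalization $S(\infty) = 1$ of \eqref{eq8}. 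The parameters $k_j, s_j$ arise from the simple zeros $ik_j$ of $f$ and from the norming constants, which are positive, giving \eqref{eq9}. By the argument principle $\text{ind}\,f$ equals the number of zeros of $f$ in $C_+$, namely $J$ (with the half-integer correction when $f(0) = 0$); combined with the unitarity relation this produces \eqref{eq7}, so $\text{ind}\,S$ is the non-positive integer required by \eqref{eq11}. Finally, the representation \eqref{eq2'} and the standard estimates on $A(x,y)$ for $q \in L_{1,1}$ control the decay of $F_s$ and of its derivative, yielding the $L^1$-conditions \eqref{eq10}.

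For the converse I would run the scheme \eqref{eq3} in reverse, starting from data satisfying \eqref{eq8}--\eqref{eq11} and checking at each stage that the output is well defined and lands in the correct class. Condition \eqref{eq10} guarantees that $F$ in \eqref{eq4} is well defined and that the integral operator $F$ appearing in \eqref{eq5} is compact on the relevant $L_p(\mathbb{R}_+)$. The crucial analytic input, and the point at which the approach departs from \cite{M}, is the Riemann problem $f(-k) = S(k) f(k)$ on the real axis: using the symmetry in \eqref{eq8} together with the index information \eqref{eq11}, I would construct a function $f(k)$ analytic in $C_+$ with exactly the prescribed simple zeros $ik_j$, thereby realizing the given $S(k)$ as a genuine scattering function and fixing the Fredholm index of $I + F$ to be zero.

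The heart of the argument, and the step I expect to be the main obstacle, is proving that Marchenko's equation \eqref{eq5} is uniquely solvable, i.e. that $I + F$ is not merely Fredholm of index zero but actually invertible. I would establish injectivity by showing $I + F$ is in fact positive definite: writing out $\langle (I+F)h, h\rangle$ and inserting the representation \eqref{eq4} of $F$, the continuous part $F_s$ contributes a nonnegative term governed by the unitarity in \eqref{eq8}, while the discrete part $F_d$ contributes a sum with coefficients $s_j > 0$ by \eqref{eq9}; hence $(I+F)h = 0$ forces $h = 0$, and here the Riemann-problem analysis takes over the role played by the negative-half-line equations in \cite{M}. Once $A(x,y)$ is recovered, I would set $q := -2\dot{A}$ as in \eqref{eq6} and use the integral identity \eqref{eq6'} together with the decay of $F$ from \eqref{eq10} to verify $q \in L_{1,1}$. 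The final task is to close the loop: one must check that the potential just constructed generates exactly the data $\mathcal{S}$ of \eqref{eq2} we started from --- that the Jost function of $q$ has precisely the zeros $ik_j$ and norming constants $s_j$ and reproduces $S(k)$ --- which is precisely the reversibility asserted in Theorem \ref{Thm1} and completes the equivalence.
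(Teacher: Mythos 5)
Your overall architecture coincides with the paper's (necessity by direct verification from the Jost function, sufficiency via the Riemann problem followed by the Marchenko equation), but the proposal glosses over precisely the two steps that carry the technical weight, and in one place the claim as stated is not correct. In the necessity direction, the assertion that the representation \eqref{eq2'} and ``standard estimates on $A(x,y)$'' yield \eqref{eq10} is not an argument: the inclusion $xF'\in L^1(\mathbb{R}_+)$ is the hard part of that direction. The paper obtains $F\in L^1(\mathbb{R}_+)$ by reading \eqref{eq5} backwards as an equation for $F$ with the operator $Th=\int_p^\infty A(x,x+t-p)h(t)\,dt$, proving bounded invertibility of $I+T$ by a convolution/Fourier argument that rests on the non-vanishing of $1+\int_0^\infty A(x,x+u)e^{iku}du$ off a null set; it then differentiates \eqref{eq5} on the diagonal, integrates by parts, and reduces $xF'\in L^1$ to three concrete inclusions, the last of which needs the identity $A_x(x,s)-A_s(x,s)=-\int_x^\infty q(t)A(t,s+t-x)\,dt$ extracted from \eqref{eq6'}. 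None of this is indicated in your sketch, and without some such mechanism the claim $xF'\in L^1$ does not follow.

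In the sufficiency direction your injectivity argument is overstated: with $\tilde h(k)=\int_0^\infty h(y)e^{iky}dy$ one finds $\langle (I+F)h,h\rangle=\|h\|^2-\frac{1}{2\pi}\int_{\mathbb{R}}S(k)\tilde h(k)^2dk+\sum_j s_j\tilde h(ik_j)^2$, and the $F_s$-contribution is \emph{not} separately nonnegative; unitarity $|S|=1$ only gives $\langle (I+F)h,h\rangle\ge 0$, so $I+F$ is not positive definite and injectivity requires analyzing the equality case, which is exactly where the Riemann-problem representation $S(k)=f(-k)/f(k)$ must be brought in (the paper does not reprove this step but cites \cite{R470} and \cite{M}); one must also pass from $L^2$ to the $L^1$ setting in which \eqref{eq5} is posed. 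A second genuine problem is your use of \eqref{eq6'} to prove $q\in L_{1,1}$: that identity presupposes that $A$ is the transformation kernel of the potential $q$, which is not yet known for the $A$ produced by solving \eqref{eq5} from the data. The paper avoids this circularity by differentiating \eqref{eq28} in $x$, using the uniform boundedness of $(I+Q)^{-1}$ in $L^1$ and the weight $w(x)=\int_x^\infty|F'(t)|dt$ to conclude $x\dot A(x,x)\in L^1$ directly from assumption \eqref{eq10}. Your final step, closing the loop through Theorem \ref{Thm1}, agrees with the paper.
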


A characterization of the scattering data for $q \in L_{1,1}$ has been proposed in \cite{M}, where this characterization is different
 from the one given in Theorem 1.2. There is no characterization of the scattering data corresponding to $q\in L_{1,1}$ in other widely used books on inverse scattering, for example, in \cite{L}.

In Section 2 proofs are given.

\section{Proofs}
\begin{proof}[Proof of Theorem 1.1]
Let us recall the known estimates: if $q \in L_{1,1}$, then
\begin{eqnarray}
|A(x,y)| \leq c\int_{\frac{x+y}{2}}^\infty |q(t)|dt := cz\left( \frac{x+y}{2} \right),  \label{eq13} \\
\quad |A(y)| := |A(0,y)| \leq c\int_{y/2}^\infty |q(t)|dt, \label{eq13'}\\
\int_0^\infty z(x)dx = \int_0^\infty t|q(t)|dt < \infty. \label{eq14}
\end{eqnarray}
Let us prove the invertibility of the steps in the inversion procedure \eqref{eq3}.

{\em To prove $F(x) \Rightarrow \mathcal{S}$}, let us take $x \to -\infty$ in \eqref{eq4} and assume that $0 < k_1 < k_2 \dots < k_J$.
 Then $k_J$ and $s_J$ are determined as the main term of the asymptotic of $F(x)$ as $x\to -\infty$.
   Consider $F(x) - s_Je^{-k_J x}$ and let $x \to -\infty$. Then $k_{J - 1}$ and $s_{J - 1}$ are determined
as the main term of the asymptotic of $F(x) - s_Je^{-k_J x}$ as $x\to -\infty$.  Repeat this argument $J$ times and get $k_1, s_1, \dots, k_J, s_J$, that is, $F_d(x)$. Thus, $F_s(x) = F(x) - F_d(x)$ is found. Taking the inverse Fourier transform of $F_s$, one finds $1 - S(k)$:
$$1 - S(k) = \int_{-\infty}^\infty F_s(x)e^{-ikx}dx.$$
Therefore, $S(k)$ is found, and the scattering data $\mathcal{S}$ are uniquely recovered.\\

{\em To prove $A \Rightarrow F$}, one considers \eqref{eq5} as a Volterra equation for $F$ with the kernel $A(x,y)$. Using \eqref{eq13} it is not difficult to prove that this equation is uniquely solvable by iterations and $F$ is uniquely determined if $A$ is given.

{\em Here is an alternative proof:}

Given $A(x,y)$, one constructs
\begin{equation}\label{eq15}
f(x,k) = e^{ikx} + \int_x^\infty A(x,y)e^{iky}dy,
\end{equation}
and finds $f(k):=f(0,k)$:
\begin{equation}\label{eq16}
f(k) = 1 + \int_0^\infty A(y)e^{iky}dy.
\end{equation}

The zeros of $f(k)$ in $\mathbb{C}_+$ are the numbers $ik_j$, $k_j > 0$, $1 \leq j \leq J$, and so the function $S(k) = \frac{f(-k)}{f(k)}$, the numbers $k_j > 0$ and the number $J$ are found. The numbers $s_j > 0$ are found uniquely by the formulas:
\begin{equation}\label{eq17}
s_j = ||f_j(x)||^{-2}, \quad f_j(x) = e^{-k_jx} + \int_x^\infty A(x,y)e^{-k_jy}dy,
\end{equation}
see also formula \eqref{eq20} below.\\

{\em To prove $q \Rightarrow A(x,y)$}, one can use the known equation for $A(x,y)$ (see, for example, \cite{M}, p. 175):
\begin{equation}\label{eq18}
A(x,y) = \frac{1}{2}\int_{\frac{x+y}{2}}^\infty q(t)dt + \frac{1}{2}\int_x^\infty ds q(s) \int_{y-s+x}^{y+s-x}A(s,t)dt,
\end{equation}
which is uniquely solvable by iterations for $A(x,y)$ if $q \in L_{1,1}$ is known.\\

Theorem \ref{Thm1} is proved.
\end{proof}

\textbf{Remark 1.} From Theorem \ref{Thm1} it follows that $q$, obtained by the inversion scheme \eqref{eq3}
( see formula \eqref{eq6}), is identical with the original $q$ that generated the scattering data $\mathcal{S}$.
Indeed, both $q$ have the same scattering data  by Theorem \ref{Thm1}, so they both have the same $A(x,y)$.
Therefore, these two potentials are identical: they are both calculated by formula \eqref{eq6}.\\

\begin{proof}[Proof of Theorem \ref{Thm2}.]$\\$
{\em Necessity.} Assume that $q \in L_{1,1}$. It is known (see \cite{M} or \cite{R470}) that the solution $f(x,k)$ is defined uniquely by the equation
\begin{equation}\label{eq19}
f(x,k) = e^{ikx} + \int_x^\infty \frac{\sin(k(y - x))}{k}q(y)f(y,k)dy,
\end{equation}
which is of Volterra type because $q \in L_{1,1}$.
  The data $\mathcal{S}$ can be constructed  from  $f(x,k)$: the zeros of $f(0,k) := f(k)$ in $\mathbb{C}_+$ are simple zeros
   $ik_j$, $k_j > 0$, $1 \leq j \leq J$, and, possibly, $k = 0$ is a simple zero. The function $S(k) = \frac{f(-k)}{f(k)}$ at
    $k = 0$ does not have zero:
$$S(0) = \begin{cases} 1 &\mbox{if } f(0) \neq 0, \\
-1 & \mbox{if } f(0) = 0. \end{cases}$$
Here the L' Hospital's rule and the simplicity of zero $k = 0$ were used: if $f(0)=0$ then
$$ S(0)=\lim_{k\to 0}\frac{f(-k)}{f(k)}=-\frac{\dot{f}(0)}{\dot{f}(0)}=-1.$$
If $q$ is real-valued, then $\overline{f(k)} = f(-k)$, $k \in \mathbb{R}$, so $\overline{S(k)} = S(-k) = S^{-1}(k)$, $|S(k)| = 1$.
Since $f(\infty) = 1$ (see \eqref{eq2'}) one has $S(\infty) = 1$. Let us define the phase shift by the formula
$f(k) = |f(k)|e^{-i\delta(k)}$, so $S(k) = e^{2i\delta(k)}$ and $\delta(-k) = -\delta(k)$, $k \in \mathbb{R}$. Finally, let
\begin{equation}\label{eq20}
s_j = - \frac{2ik_j f'(0, ik_j)}{\dot{f}(ik_j)},
\end{equation}
see \cite{M} or \cite{R470}. Consequently, $q \in L_{1,1}$ uniquely determines the scattering data $\mathcal{S}$, defined in \eqref{eq2}.
If $\mathcal{S}$ is given, one calculates $F(x)$ by formula \eqref{eq4}. The basic equation \eqref{eq5} is derived from formula \eqref{eq1}.
This derivation is known (see, for example, \cite{R470} p. 139) and by this reason is omitted.\\

It is known that equation \eqref{eq5} has a unique solution $A(x,y)$ in $L^p(\mathbb{R}_+)$, $p = 1,2$, for any $x \geq 0$, and the operator
 $Fh := \int_x^\infty F(s + y)h(s)ds$ is compact in $L^p(\mathbb{R}_+), p = 0,1,2$, for any $x \geq 0$ (see \cite{M} or \cite{R470}).
  If $A(x,y)$ is found, then $q$ is found by formula \eqref{eq6}, and the recovery process of finding $q$ from the scattering data
  $\mathcal{S}$ is completed.
 We have checked above that conditions \eqref{eq8} and \eqref{eq9} hold. Condition \eqref{eq11} follows from formula \eqref{eq7}. Let us check
  conditions \eqref{eq10}. Write equation \eqref{eq5} as
\begin{equation}\label{eq21}
A(x, p-x) + F(p) + \int_p^\infty A(x, x + t - p)F(t)dt = 0.
\end{equation}
Here the substitutions $p = x+y$, $t = s+y$ were made. Consider the operator
\begin{equation}\label{eq22}
Th := \int_p^\infty A(x, x + t - p)h(t)dt
\end{equation}
as an operator in $L^1(2x, \infty)$ since $p = x+y \geq 2x$. For any $x \geq 0$ this operator is compact. Its norm is bounded
\begin{eqnarray}
||T|| := ||T||_{L^1(2x, \infty) \to L^1(2x, \infty)} \leq \int_0^\infty |A(x,x+u)|du \nonumber\\
\leq c\int_0^\infty du \int_{x + \frac{u}{2}}^\infty |q(s)|ds \leq 2c\int_x^\infty s |q(s)|ds.\label{eq23}
\end{eqnarray}
Here estimate \eqref{eq13} was used. The operator $(I + T)^{-1}$ is bounded by the Fredholm alternative because the equation $h + Th = 0$ has only the trivial solution for any $x \geq 0$. Indeed, by definition $A(x,y) = 0$ if $y < x$, so $A(x, x + t - p) = 0$ if $t < p$.
Thus, the equation $h + Th = 0$ can be considered as a convolution equation
\begin{equation}\label{eq23'}
h(p) + \int_{-\infty}^\infty A(x,x+t-p)h(t)dt = 0, \quad -\infty < p < \infty,
\end{equation}
where $h(t)=0$ for $t<2x$.
Taking the Fourier transform of equation  \eqref{eq23'}  and denoting $\tilde{h}(k) := \int_{-\infty}^\infty h(t)e^{itk}dt$ in the sense of distributions, one gets
\begin{equation}\label{eq24}
\tilde{h}(k) + \int_{-\infty}^\infty A(x,x+u)e^{-iuk}du  \tilde{h}(k) = 0, \quad \forall k \in \mathbb{R}.
\end{equation}
Since $A(x,x+u) = 0$ if $u < 0$, one can rewrite \eqref{eq24} as
\begin{equation}\label{eq25}
\left( 1 + \int_0^\infty A(x,x+u)e^{-iuk}du \right)\tilde{h}(k) = 0.
\end{equation}
By formula \eqref{eq2'}, the function $1 + \int_0^\infty A(x,x+u)e^{-iuk}du$ does not vanish for all real $k$ except, possibly, at $k = 0$, since the function $f(x,k)$ does not vanish for any fixed $x \geq 0$ as a function of $k$ on the sets of positive Lebesgue's measure on $\mathbb{R}$.
Note that the function $\int_0^\infty A(x,x+u)e^{-iuk}du$ belongs to the Hardy class in the half-plane Im$k \leq 0$ (see, for example, \cite{K}), because $A(x,x+u) \in L^1(\mathbb{R}_+)$ if estimate \eqref{eq14} holds. Since $f(k)$ does not vanish on sets
of positive Lebesgue's measure on $\mathbb{R}$, it follows that
  equation \eqref{eq25} implies $\tilde{h} = 0$, so $h = 0$, and the operator $(I + T)^{-1}$ is boundedly invertible
in $L^p$, $p=1,2,$ that is, $||(I + T)^{-1}|| \leq c_0$ for all $x \geq 0$. Consequently, equation  \eqref{eq21} implies:
\begin{equation}\label{eq26}
\int_0^\infty |F(p)|dp \leq c_0\int_0^\infty |A(0,p)|dp \leq c_0c\int_0^\infty dp \int_{p/2}^\infty |q(s)|ds < \infty.
\end{equation}
Since $F=F_s+F_d$ and $F_d(x) \in L^1$, it follows from \eqref{eq26} that
\begin{equation}\label{eq27}
\int_0^\infty |F_s(p)|dp < \infty.
\end{equation}
One can also prove that $\displaystyle\int_{-\infty}^0 |F_s(p)|dp < \infty$, but we do not use this estimate.\\

{\em Let us prove that $xF'(x) \in L^1(\mathbb{R}_+)$.} Write the basic equation \eqref{eq5} as
\begin{equation}\label{eq28}
A(x,x+t) + F(2x+t) + \int_0^\infty A(x,x+p)F(2x+p+t)dp = 0, \quad t \geq 0, x \geq 0.
\end{equation}
Since the operator $(I + Q)^{-1}$ is bounded in $L^1(\mathbb{R}_+)$ for all $x \geq 0$, where
$$Qh := \displaystyle\int_0^\infty F(2x+p+t)h(p)dp,$$
 it follows that
\begin{equation}\label{eq29}
\int_0^\infty |A(x,x+t)|dt \leq c \int_0^\infty |F(2x+t)|dt < \infty.
\end{equation}
Denote  $\dot{A}(x,x) := \displaystyle\frac{dA(x,x)}{dx}$. Let $y = x$ in equation \eqref{eq5} and differentiate
this equation with respect to $x$ to get:
\begin{multline}\label{eq30}
\dot{A}(x,x) + 2F'(2x) - A(x,x)F(2x) + \int_x^\infty A_x(x,s)F(s + x)ds\\ +
 \int_x^\infty A(x,s)F'(s+x)ds = 0.
\end{multline}
Integrate by parts the last integral in \eqref{eq30} to get
\begin{equation}\label{eq31}
\dot{A}(x,x) + 2F'(2x) - 2A(x,x)F(2x) + \int_x^\infty \left[ A_x(x,s) - A_s(x,s) \right] F(s + x)ds = 0.
\end{equation}
One gets from \eqref{eq31} the desired inclusion  $xF'(x) \in L^1 := L^1(\mathbb{R}_+)$  provided that:

 a) $x\dot{A}(x,x) \in L^1$,

 b) $xA(x,x)F(2x) \in L^1$,

 and

 c) $x\int_x^\infty \left[ A_x(x,s) - A_s(x,s) \right] F(s + x)ds \in L^1$.\\
The inclusion a) follows from formula \eqref{eq6} and the assumption $q \in L_{1,1}$. The inclusion b) follows from the estimate
\begin{equation}\label{eq32}
|xA(x,x)| = \left| -\frac{1}{2}x\int_x^\infty q(s)ds \right| \leq \frac{1}{2}\int_x^\infty s |q(s)|ds < \infty, \quad \forall x \geq 0,
\end{equation}
and from the inclusion $F \in L^1$.\\
Let us check the inclusion c). From the equation \eqref{eq6'}, it follows that
\begin{eqnarray}
A_x(x,y) = -\frac{1}{4}q\left(\frac{x+y}{2}\right) + \frac{1}{2}\int_x^\infty ds q(s) [-A(s,y+s-x) - A(s,y-s+x)], \label{eq34} \\
A_y(x,y) = -\frac{1}{4}q\left(\frac{x+y}{2}\right) + \frac{1}{2}\int_x^\infty ds q(s) [A(s,y+s-x) - A(s,y-s+x)]. \label{eq35}
\end{eqnarray}
Thus,
\begin{equation}\label{eq36}
A_x(x,s) - A_s(x,s) = -\int_x^\infty dt q(t) A(t,s+t-x).
\end{equation}
From \eqref{eq36} and \eqref{eq13} one gets
\begin{multline}\label{eq37}
|A_x(x,s) - A_s(x,s)| \leq c\int_x^\infty dt |q(t)|\int_{t + \frac{s - x}{2}}^\infty |q(p)|dp \\
\leq c\left( \int_x^\infty |q(t)|dt \right)^2 := c\tau^2(x).
\end{multline}
Therefore, inclusion c) holds if
\begin{equation}\label{eq38}
x\int_x^\infty \tau^2(x)F(s + x)ds \in L^1.
\end{equation}
Denoting by $c>0$ various estimation constants, one gets:
\begin{eqnarray}
\max_{x \geq 0}x\tau(x) \leq \int_x^\infty t|q(t)|dt \leq \int_0^\infty t|q(t)|dt \leq c; \quad \int_0^\infty \tau(x)dx \leq c; \label{eq39} \\
\int_0^\infty dx x\tau^2(x)\int_x^\infty |F(s + x)|ds \leq c \int_0^\infty |F(s)|ds < \infty.
\end{eqnarray}
Therefore, conditions \eqref{eq10} hold if $q \in L_{1,1}$. The necessity  of our conditions \eqref{eq8}-\eqref{eq11} is proved. \hfill$\Box$\\

\noindent {\em Sufficiency.} Assume now that conditions \eqref{eq8} - \eqref{eq11} hold and let us prove that
the corresponding scattering data come from a potential $q \in L_{1,1}$.

From conditions \eqref{eq8} and \eqref{eq11} it follows that the Riemann problem
\begin{equation}\label{eq41}
f(k) = S(-k)f(-k)
\end{equation}
is solvable. Here $f(k)$ is analytic in $\mathbb{C}_+$, $f(\infty) = 1$, $f(-k)$ is analytic in
$\mathbb{C}_-$, $f(-k) = \overline{f(k)}$ when $k \in \mathbb{R}$.   The solution to this Riemann problem is unique if
\begin{equation}\label{eq42}
f(ik_j) = 0, \quad \dot{f}(ik_j) \neq 0, \quad 1 \leq j \leq J, \quad \text{ind}S(k) = -2J.
\end{equation}
If ind$S(k) = -2J - 1$ then we require in addition to \eqref{eq42} that
\begin{equation}\label{eq43}
f(0) = 0, \quad \dot{f}(0) \neq 0.
\end{equation}
To prove the existence and uniqueness of the solution $f(k)$ to the Riemann problem \eqref{eq41}, satisfying conditions \eqref{eq42}
(or conditions \eqref{eq42}-\eqref{eq43} if ind$S(k)=-2J-1$, in other words, if $f(0)=0$),  let us introduce the functions
\begin{equation}\label{eq44}
w(k) := \prod_{j = 1}^J\frac{k - ik_j}{k + ik_j}; \quad w_0(k) := w(k)\frac{k}{k + i\kappa},
\end{equation}
where $\kappa>0$ is a number, $\kappa \neq k_j, \, \forall j = 1, \dots, J$.
If $f(0) \neq 0$, then equation \eqref{eq41} can be rewritten as
\begin{equation}\label{eq45}
\phi_+(k) := \frac{f(k)}{w(k)} = \frac{S(-k)w(-k)}{w(k)} \quad \frac{f(-k)}{w(-k)} := \frac{S(-k)}{w^2(k)}\phi_-(k),
\end{equation}
where we took into consideration that
\begin{equation}\label{eq46}
w(-k) = \frac{1}{w(k)}, \quad w(-k) = \overline{w(k)}, \quad k \in \mathbb{R}.
\end{equation}
If $S(k)$ satisfies \eqref{eq8} then $\frac{S(-k)}{w^2(k)}$ satisfies \eqref{eq8} and
 ind$\frac{S(-k)}{w^2(k)} = 0$ if ind$S(k) = -2J$. Indeed, ind$S(-k) = -$ind$S(k) = 2J$ and
 ind$w^2(k) = 2$ind$w(k) = 2J$. Since ind$\frac{S(-k)}{w^2(k)} = 0$ and $S(k) \neq 0$, equation \eqref{eq45}
implies
\begin{equation}\label{eq47}
\ln\phi_+ = \ln\frac{S(-k)}{w^2(k)} + \ln\phi_-(k),
\end{equation}
where $\ln\phi_+$ is analytic in $\mathbb{C}_+$ and $\ln\phi_-$ is analytic in $\mathbb{C}_-$.
Consequently,
\begin{eqnarray}
\phi_+(k) = e^{ \frac{1}{2\pi i}\int_{-\infty}^\infty \frac{\ln S(-t)}{w^2(t)} \frac{dt}{t - k} }, \quad \text{Im}k > 0, \label{eq48} \\
\phi_+(k + i0) = e^{ \frac{1}{2\pi i}\text{v.p.}\int_{-\infty}^\infty \frac{\ln S(-t)}{w^2(t)} \frac{dt}{t - k} +
\frac{1}{2}\ln\frac{S(-k)}{w^2(k)} }, \quad k \in \mathbb{R}, \label{eq48}
\end{eqnarray}
and
\begin{equation}\label{eq49}
f(k) = w(k)\phi_+(k),
\end{equation}
with
\begin{equation}\label{eq50}
f(-k) = \overline{f(k)}, \quad k \in \mathbb{R}.
\end{equation}
Consider now the case ind$S(k) = -2J - 1$, and look for $f$ such that $f(0) = 0, \dot{f}(0) \neq 0$.
In this case the argument is similar to the one used in the case $f(0) \neq 0$, but the function $w(k)$ is replaced by $w_0(k)$, and one has:
\begin{equation}\label{eq51}
\text{ind}w_0(k) = 2J + 1, \quad \text{ind}\frac{S(-k)}{w_0(k)} = 0.
\end{equation}
Denote  $\dot{f} = \frac{df(k)}{dk}$.
Let us summarize what we have proved.
\begin{lemma}\label{Lem1}
If conditions \eqref{eq8} and \eqref{eq11} hold, then $S(k) = \frac{f(-k)}{f(k)}$, where $f(k)$ is analytic
in $\mathbb{C}_+$, $f(\infty) = 1$, $f(ik_j) = 0$, $\dot{f}(ik_j) \neq 0$, $1 \leq j \leq J$.
 If ind$S(k) = -2J-1$, then $f(0) = 0$, $\dot{f}(0) \neq 0$.
\end{lemma}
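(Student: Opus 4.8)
The plan is to read the assertion as the solvability of the homogeneous Riemann problem \eqref{eq41}, $f(k)=S(-k)f(-k)$, in the class of functions $f$ that are analytic in $\mathbb{C}_+$, satisfy $f(\infty)=1$, and whose boundary values obey the symmetry $f(-k)=\overline{f(k)}$ on $\mathbb{R}$. Conditions \eqref{eq8} guarantee that the coefficient $S(-k)$ is unimodular and nonvanishing on $\mathbb{R}$, which is exactly what makes a multiplicative (and, after taking logarithms, an additive) factorization available; condition \eqref{eq11} controls the only genuine obstruction, namely the index, whose value is pinned down by \eqref{eq7}.

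First I would remove the index. I would introduce the Blaschke-type factor $w(k)$ of \eqref{eq44}, which is unimodular on $\mathbb{R}$, satisfies $w(-k)=w(k)^{-1}=\overline{w(k)}$ as in \eqref{eq46}, carries exactly the prescribed simple zeros $ik_j$ in $\mathbb{C}_+$, and has ind$\,w=J$. Setting $\phi_+:=f/w$ and $\phi_-:=f(-\cdot)/w(-\cdot)$ converts \eqref{eq41} into the equivalent problem \eqref{eq45} with coefficient $S(-k)/w^2(k)$; by the index computation following \eqref{eq46} this coefficient does not vanish on $\mathbb{R}$ and has index zero.

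Next I would solve the reduced, index-zero problem. Since $S(-k)/w^2(k)$ is nonvanishing and unimodular with zero index, its logarithm is single-valued on $\mathbb{R}$, so the problem splits additively as in \eqref{eq47}; the Sokhotski--Plemelj formula then produces $\ln\phi_+$ as a Cauchy integral, yielding an explicit $\phi_+$ that is analytic and nonvanishing in $\mathbb{C}_+$ with $\phi_+(\infty)=1$. Defining $f:=w\phi_+$ as in \eqref{eq49} returns a function analytic in $\mathbb{C}_+$ with $f(\infty)=1$ whose only zeros are the simple zeros $ik_j$ contributed by $w$ (so $\dot f(ik_j)\neq0$), which obeys the symmetry \eqref{eq50} and reproduces $S(k)=f(-k)/f(k)$. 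When ind$S(k)=-2J-1$ I would repeat the argument with $w$ replaced by $w_0$ from \eqref{eq44}; the extra factor $k/(k+i\kappa)$ supplies the missing unit of index recorded in \eqref{eq51} and places a simple zero of $f$ at $k=0$ with $\dot f(0)\neq0$, which is precisely the case distinction in the statement.

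Finally, for uniqueness I would compare two solutions $f_1,f_2$: their quotient is analytic and nonvanishing in $\mathbb{C}_+$ (the common zeros cancel), tends to $1$ at infinity, and has unimodular boundary values by \eqref{eq8}, so a Liouville-type argument forces $f_1\equiv f_2$. The step I expect to be the main obstacle is the odd-index case: the factor $k/(k+i\kappa)$ has its zero on the contour $\mathbb{R}$, so the index bookkeeping in \eqref{eq51} and the Cauchy integral must be read with a principal value at $k=0$, and one must verify that the resulting $f$ has an honest simple zero there (neither milder nor stronger vanishing) while the normalization $f(\infty)=1$ survives. Everything else reduces to the standard regularity of Cauchy integrals of densities in $L^1\cap L^\infty$, which I would invoke rather than re-derive.
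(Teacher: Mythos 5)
Your proposal follows essentially the same route as the paper: reduce the index with the Blaschke factor $w(k)$ (or $w_0(k)$ in the odd-index case), pass to the index-zero problem \eqref{eq45}, take logarithms and apply the Sokhotski--Plemelj formula to obtain $\phi_+$ as in \eqref{eq48}, and set $f=w\phi_+$. Your added remarks on uniqueness and on the boundary zero of $k/(k+i\kappa)$ are sensible refinements of the same argument rather than a different approach.
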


 Let us prove that the potential \eqref{eq6}, obtained by solving equation \eqref{eq5}, belongs to $L_{1,1}$.
  The function $F(x)$, defined by formula \eqref{eq4}, is real-valued because of the assumptions \eqref{eq8} and \eqref{eq9}.
   Therefore, the solution $A(x,y)$ to equation \eqref{eq5} is real-valued and so is potential \eqref{eq6}.\\
The integral operator with the kernel $F(s+y)$ in \eqref{eq5} is compact in $L^1$ due to the first assumption \eqref{eq10}.
Thus, by the Fredholm alternative, equation \eqref{eq5} has a unique solution if the equation
\begin{equation}\label{eq52}
h(y) + \int_x^\infty F(s+y)h(s)ds = 0, \quad \forall x \geq 0,
\end{equation}
has only the trivial solution. Here $F(x)$ is defined in \eqref{eq4}. A proof that equation \eqref{eq52} has only the trivial
 solution in $L^1$ is given in \cite{R470} (and in \cite{M}, by a different argument). Therefore, equation \eqref{eq5} defines
 uniquely the kernel $A(x,y)$, and, consequently,
 the potential $q(x)$ is uniquely determined by formula \eqref{eq6}. Let us prove the inclusion $q \in L_{1,1}$.
 This inclusion follows from the assumptions \eqref{eq10} and equation \eqref{eq5}
as we prove below.
Let us rewrite equation \eqref{eq5} in the form \eqref{eq28} and differentiate equation \eqref{eq28}  with respect to $x$,
 denoting $\frac{dA(x,x+t)}{dx} := \dot{A}(x,x+t)$. The result is:
\begin{multline}\label{eq53}
\dot{A}(x,x+t) + 2F'(2x+t) + \int_0^\infty \dot{A}(x, x+p)F(2x + t + p)dp \, + \\ + 2\int_0^\infty A(x,x+p)F'(2x+t+p)dp = 0.
\end{multline}
From the bounded invertibility of the operator $I + Q$ in equation \eqref{eq28}, where $Q$ is compact in $L^1$,
\begin{equation}\label{eq54}
Qh := \int_0^\infty F(2x + t + p)h(p)dp,
\end{equation}
it follows from equation \eqref{eq28} that estimate \eqref{eq29} holds for all $x\ge 0$.\\
Let us define
\begin{equation}\label{eq55}
w(x) := \int_x^\infty |F'(t)|dt, \quad w_1(x): = \int_x^\infty w(t)dt.
\end{equation}
If the second assumption \eqref{eq10} holds, then $\max_{x \geq 0}w_1(x) \leq c$. Note that
\begin{multline}\label{eq56}
|F(t)| \leq \int_t^\infty |F'(s)|ds = w(t); \quad xw(x)\le \int_x^\infty t|F'(t)|dt<c; \\
\int_0^\infty w(x)dx\le \int_0^\infty t|F'(t)|dt<c.
\end{multline}
Equation \eqref{eq28} implies
\begin{equation}\label{eq57}
|A(x,x+t)| \leq w(2x+t) + c_1w(2x+t) \leq cw(2x + t).
\end{equation}
By $c > 0$, $c_1>0$ and $c_2>0$ various estimation constants are denoted. From equation \eqref{eq53} and the
 boundedness of the operator $(I + Q)^{-1}$ in $L^1$, one gets
\begin{multline}\label{eq58}
\int_0^\infty |\dot{A}(x, x+t)|dt \leq c\left( \int_0^\infty |F'(2x+t)|dt +
 \int_0^\infty dt\int_0^\infty |A(x, x+p)|\cdot \right. \\ \cdot |F'(2x+t+p)|dp \bigg) \leq c\Big( w(2x) + w(2x)c \Big) \leq c_2w(2x).
\end{multline}
Equation \eqref{eq53} with $t = 0$ implies $x\dot{A} \in L^1$ provided that $xF' \in L^1$ and
\begin{equation}\label{eq59}
J_1 := x\int_0^\infty |\dot{A}(x, x+p)| \, |F(2x+p)|dp \in L^1,
\end{equation}
and
\begin{equation}\label{eq60}
J_2 := x\int_0^\infty |A(x,x+p)| \, |F'(2x+p)|dp \in L^1.
\end{equation}
The inclusion  $xF' \in L^1$ is part of the condition \eqref{eq10}.
Let us prove \eqref{eq59} and \eqref{eq60}. Using estimates \eqref{eq56} - \eqref{eq58}, one gets
\begin{equation}\label{eq61}
\int_0^\infty dx x \int_0^\infty |\dot{A}(x, x+p)|dp\, w(2x) = \int_0^\infty dx x w^2(2x) \leq c,
\end{equation}
where we have used the estimates
\begin{equation}\label{eq62}
\max_{x \geq 0}xw(x) \leq c; \quad \int_0^\infty dx w(x) \leq \int_0^\infty dx \int_x^\infty |F'(t)|dt = \int_0^\infty t|F'(t)|dt < c.
\end{equation}
Thus, relation \eqref{eq59} is proved. Let us prove \eqref{eq60}. One has
\begin{equation}\label{eq63}
\int_0^\infty dx x \int_0^\infty |A(x,x+p)| \, |F'(2x+p)|dp \leq \int_0^\infty dx x w(2x)w(2x) < c.
\end{equation}
Thus, relation \eqref{eq60} is proved and the relation $q \in L_{1,1}$ is established.
Theorem 1.2 is proved.
\end{proof}
\newpage

\end{document}